\documentclass[onecolumn,draftclsnofoot,12pt,twoside]{IEEEtran}




\newlength\fwidth


\usepackage{pgfplots}
\usepackage{url}
\usepackage{times}
\usepackage{multirow}
\usepackage{amsfonts}
\usepackage{epsfig}
\usepackage{amsmath}
\usepackage{amssymb}
\usepackage[nolist]{acronym}
\usepackage[english]{babel}
\usepackage{cite}
\usepackage{color}
\usepackage{stfloats}
\usepackage{psfrag}
\usepackage{algorithm}
\usepackage{algorithmic}
\usepackage{subfigure}
\usepackage{graphicx}
\usepackage{amsthm}
\usepackage{bbm}	
\usepackage{graphicx}
  \pgfplotsset{compat=newest}
  \usetikzlibrary{plotmarks}
  \usetikzlibrary{arrows.meta}
  \usepgfplotslibrary{patchplots}
  \usepackage{grffile}

\usepackage{times}

\usepackage{latexsym}

\usepackage{bm}

\usepackage{ifpdf}
\ifpdf
  \usepackage{thumbpdf}
\else
\fi
\usepackage{fancyhdr}   

\usepackage{stfloats}
\usepackage{epsfig}
\usepackage{epstopdf}
\usepackage[justification=centering]{caption}
\usepackage{graphicx,subfigure}
\usepackage[keeplastbox]{flushend}

\usepackage{linguex}


\newtheorem{theorem}{Theorem}

\newtheorem{corol}{Corollary}

\IEEEoverridecommandlockouts
\setlength{\textfloatsep}{10pt plus 1.0pt minus 2.0pt}
\setlength{\floatsep}{10pt plus 1.0pt minus 2.0pt}
\setlength{\intextsep}{10pt plus 1.0pt minus 2.0pt}
\let\svthefootnote\thefootnote
\hyphenation{op-tical net-works semi-conduc-tor}

\begin{document}
\begin{acronym}
\acro{GOP}{group of pictures}
\acro{MSE}{mean squared error}
\acro{RDR}{rate-distortion region}
\acro{CSI}{channel state information}
\acro{FOV}{field of view}
\acro{i.i.d.}{independent and identically distributed}
\acro{r.v.}{random variable}
\acro{rr.vv.}{random variables}
\acro{DPCM}{differential predictive coded modulation}
\acro{AWGN}{additive white Gaussian noise}
\acro{SSTCG}{spatially memoryless, spatially stationary and temporally correlated Gaussian}
\end{acronym}

\IEEEoverridecommandlockouts

\setlength{\textfloatsep}{10pt plus 1.0pt minus 2.0pt}
\setlength{\floatsep}{10pt plus 1.0pt minus 2.0pt}
\setlength{\intextsep}{10pt plus 1.0pt minus 2.0pt}
\let\svthefootnote\thefootnote

\title{The Sum-Rate-Distortion Region of Correlated Gauss-Markov Sources}
\author{Giuseppe Cocco {\em Member, IEEE}\thanks{Giuseppe Cocco is with the LTS4 Signal Processing Laboratory and with the Laboratory of Intelligent Systems,
\'{E}cole Polytechnique F\'{e}d\'{e}rale de Lausanne, Lausanne, Switzerland, Email: giuseppe.cocco@epfl.ch}, and
Laura Toni {\em Member, IEEE}
\thanks{Laura Toni is with the Department of Electronic and Electrical Engineering, University College London,
London, UK, 
Email: l.toni@ucl.ac.uk}\thanks{Giuseppe Cocco is partly founded by the European Union's Horizon 2020 research and innovation programme under the Marie Sk{\l{}}odowska-Curie grant agreement No 751062}
}

\maketitle

\begin{abstract}
Efficient low-delay video encoders are of fundamental importance to provide timely feedback in remotely controlled platforms such as drones. In order to fully understand the theoretical limits of low-delay video encoders, we consider an ideal differential predictive coded modulation (DPCM) encoder and provide the explicit derivation of the sum-rate-distortion region for a generic number of successive correlated Gauss-Markov sources along the line of the work by Ma and Ishwar. 
Furthermore, we provide an upper bound on the minimum distortion achievable in case an arbitrary number of sources are not available at the decoder.
\end{abstract}
\IEEEpeerreviewmaketitle
\section{Introduction}
The widespread diffusion of consumer cameras such as those mounted on drones or skiers' helmets streaming videos in real-time is posing novel challenges in terms of bandwidth usage. In remotely controlled platforms such as drones the video received at the ground station is used as a feedback to steer the aircraft. This imposes stringent requirements in terms of delay. A \ac{DPCM} encoder coding frames on-the-fly with an IPPPP structure may help to decrease the video compression delay, at the price of compression efficiency reduction with respect to an encoder that jointly processes all frames in the \ac{GOP}. The need for highly efficient encoders that can meet stringent delay constraints requires a deep understanding of the theoretical limits in the compression of correlated sources, such as the consecutive frames in a video.

On this regard, in \cite{ViswBergerTIT2000} the sum-rate-distortion region for two \ac{SSTCG} sources with a \ac{MSE} distortion metric is derived. Such result is extended to three sources in \cite{ishwarTIT2015}, without a generalization to a generic number $M$ of sources. Spatially correlated sources have also been extensively studied in literature (see \cite{bergerRDT,tatikonda2004,stavrouControlLetters2018} and references therein). Although related, such source model is different from the one studied in \cite{ishwarTIT2015}. While the model in \cite{tatikonda2004} assumes correlation between adjacent source samples (spatial correlation), in \cite{ViswBergerTIT2000, ishwarTIT2015}  a correlation between symbols in the same position of consecutive source vectors (temporal correlation) is assumed. In real videos, both spatial and temporal correlations are present but modelling them in an accurate and yet mathematically tractable way is a challenging task. Furthermore, a full understanding of \ac{SSTCG} sources both with and without frames losses has not yet been achieved and is undergoing intense research.

In the following, we report the full derivation of the sum-rate-distortion region for a generic number of successive correlated Gauss-Markov sources \cite{ishwarTIT2015}. Khina \emph{et al.} recently published in  \cite{khina_ITW2017_seqCodGauss} the full characterization of the distortion-rate region for a generic number of Gauss-Markov sources. However, in \cite{khina_ITW2017_seqCodGauss} the sum-rate-distortion region is not explicitely calculated. The derivation presented here goes along the line of that in \cite{ViswBergerTIT2000,ishwarTIT2015}. Starting from this result, we derive the minimum distortion achievable by a $k$-step predictor, i.e., the minimum distortion achievable in the reconstruction of a source in case the previous $k$  sources  are not available, in case Gaussian descriptions are used. Such result is relevant for real-time video streaming over unreliable channels, in that it bounds the quality of the best reconstruction achievable by a source decoder when a generic number of consecutive frames in a \ac{GOP} are lost during transmission.
\section{System Model}
We consider a camera system acquiring, compressing and streaming video in real-time. The camera acquires video frames at a rate of $F_r$ frames per second\footnote{In the following we use the terms \emph{frame} and \emph{source} interchangeably.}. A lossy compressor is applied to the captured frames, generating GOPs of $M$ frames each and with an IPPPP structure having one reference (I) frame, followed by $M-1$ predicted (P) frames. Each P-frame depends only on the previous frame. Each frame is encoded  within $T_A=1/F_r$ seconds which is smaller than the GOP duration. In this way, the compressed frame can be transmitted before the successive frames in the \ac{GOP} have been acquired, thus reducing the latency with respect to an encoder that jointly compresses the whole \ac{GOP}. We start by considering a lossless communication channel and then we move to the case in which some frames are erased on the channel.

\subsection{Source Model}\label{sec:sourceModel}
The source model considered in the following is an \ac{SSTCG} process \cite{ViswBergerTIT2000}. In an \ac{SSTCG} source the intensity of a pixel generated by the source is correlated with the value of the same pixel in other time instants (frames) but independent of the values of other pixels in the same or in other time instants. Let $n$ be the number of pixels in the source image.
A new frame is generated by the source every inter-frame period, i.e., every $T_f=1/F_r$ seconds. The $t$-th generated frame is an $n$-dimensional vector, {that can be seen as the vectorization of a bi-dimensional $\sqrt{n}\times\sqrt{n}$ matrix}, which we indicate as
$\mathbf{X}_t = \left(X_t(1),X_t(2),\ldots,X_t(n-1),X_t(n)\right)$.
$\mathbf{X}_t$ is a vector of \ac{i.i.d.} zero-mean Gaussian variables having variance $\sigma^2_t$. 
The intensity of the pixels in consecutive frames corresponding to a given point in the scene is modelled as a temporal Markov process\footnote{A triplet of discrete random variables $X, Y, Z$ forms a Markov chain in that order (denoted $X-Y-Z$) if their joint probability mass function satisfies $p(x,y,z)=p(x)p(y|x)p(z|y)$\cite{coverThomas}. The definition extends in a similar way to the case of continuous random variables.}, i.e., $\forall t, t>1$ we have
\begin{eqnarray}\label{eqn:markov}
 X_{t-1}(i)-X_{t}(i)-X_{t+1}(i).
\end{eqnarray}
Such model has been widely used for raw videos \cite{ViswBergerTIT2000,yangITA2009} and for the evolution of the
innovations process along optical-flow motion trajectories for groups of adjacent pixels \cite{ishwarTIT2015}.

\subsection{Source Encoder}\label{sec:souenc}
Given a frame $\mathbf{X}_t$, the source encoder generates a compressed version  that can be described with the least number of bits per symbol while satisfying a constraint on the error (\emph{distortion}) between the corresponding reconstruction $\widehat{\mathbf{X}}_t$ and $\mathbf{X}_t$\cite{bergerRDT}. 
We consider a per-frame \ac{MSE} average distortion metric. Specifically, let us define the following:
\begin{equation}\label{eqn:distortion}
d^{(n)}_t\left(\mathbf{X}_t,\widehat{\mathbf{X}}_t\right)\triangleq \frac{1}{n}\sum_{i=1}^n\left({X}_t(i)-\widehat{{X}}_t(i)\right)^2.
\end{equation}
The average distortion is defined as $\mathbf{E}\left\{d^{(n)}_t\left(\mathbf{X}_t,\widehat{\mathbf{X}}_t\right)\right\}$, where the average is taken with respect to the distribution of the source vectors. We define the target distortion tuple $\mathbf{D}=\left(D_1, D_2, D_3,\ldots\right)$\footnote{In practical system it is common practice to set a common $D$ for the whole \ac{GOP}, which is a special case of the model we consider here.}. It is required that, for large $n$, the average distortion for frame number $t$ is lower than or equal to $D_t$, i.e., 
\begin{equation}\label{eqn:dist_tuple}
\lim_{n\rightarrow \infty}\mathbf{E}\left\{d^{(n)}_t\left(\mathbf{X}_t,\widehat{\mathbf{X}}_t\right)\right\}\leq D_t.
\end{equation}
In the following we assume $D_t<\sigma_t^2$, $\forall t\in\{1,2,\ldots,M\}$.
The source encoder we consider is an idealized \ac{DPCM} encoder. Such source encoder has been shown in \cite{ishwarTIT2015} to be optimal for the considered source model, in the sense that it achieves the minimum sum-rate when an \ac{MSE} distortion measure is adopted for all distortion values within the rate-distortion region. The idealized \ac{DPCM} encoder works as follows.
When the first frame of a GOP $\mathbf{X}_1$ is captured and made available at the source encoder, it is compressed using a minimum sum-rate source codebook at a rate $R_{1}(D_1)$ bits per source symbol. The encoding of the first frame in each GOP is done independently of all previous frames. Once the encoding of the first frame is completed, the index of the description $\widehat{\mathbf{U}}_1$ of the corresponding source codeword $\widehat{\mathbf{X}}_1$ is sent over the channel. When, after $T_f$ seconds, the second frame is generated,  the source encoder compresses it taking into account ${\mathbf{X}}_1$, ${\mathbf{X}}_2$ and $\widehat{\mathbf{X}}_1$ and outputs the auxiliary vector $\widehat{\mathbf{U}}_2$ having rate $R_{2}(D_2)$ bits per symbol. In general, the $t$-th frame in a GOP is compressed taking into account all available frames ${\mathbf{X}}^t={\mathbf{X}}_1,\ldots,\mathbf{X}_t$ and all available encoder outputs $\widehat{\mathbf{U}}^{t-1}=\widehat{\mathbf{U}}_1,\widehat{\mathbf{U}}_2,\ldots,\widehat{\mathbf{U}}_{t-1}$.
Frames are source-encoded in groups of $M$, where $M$ is the product between the GOP duration expressed in seconds and the frame rate $F_r$ expressed in Hz. This models an IPPPP video compressor in which a given frame within a GOP can be reconstructed only if all previous source-coded frames of the same GOP are available at the decoder. 
\subsection{Source Decoder}
The decoder at time $t$ generates a reconstruction $\widehat{X}_t$ of frame $X_t$ using all available encoder outputs received so far $\widehat{\mathbf{U}}^{t}=\widehat{\mathbf{U}}_1,\widehat{\mathbf{U}}_2,\ldots,\widehat{\mathbf{U}}_{t-1}, \widehat{\mathbf{U}}_{t}$ and trying to achieve the desired distortion tuple $\mathbf{D}=\left(D_1, D_2, \ldots, D_t\right)$.
\section{Sum-Rate Distortion Region}
The source coding scheme described in Section \ref{sec:souenc} is similar to the one proposed in \cite{ViswBergerTIT2000}, in which two correlated source vectors are successively generated and encoded. In \cite{ishwarTIT2015} the \ac{RDR} for a generic number of frames with generic encoding and decoding delays is derived. The sum-rate-distortion region is also derived and the results are specialized for the case of three correlated Gaussian sources ($M=3$). In the following theorem the approach of \cite{ishwarTIT2015} is used to explicitly calculate the sum-rate-distortion region for a generic number $M$ of Gaussian source vectors.
\begin{theorem}\label{theorem1}
The minimum sum-rate within the rate-distortion region for $M$ successive correlated Gauss$-$Markov sources and MSE distortion is
\begin{eqnarray}
R^{(M)}_{\Sigma}(\mathbf{D})=\sum_{i=1}^M\frac{1}{2}\log^+\left(\frac{\sigma^2_{W_i}}{D_i}\right)
\end{eqnarray}
where
\begin{equation}
    \sigma^2_{W_t}= 
\begin{cases}
    \sigma^2_1,& \text{for } t= 1\\
    \rho_s^2\frac{\sigma^2_{t}}{\sigma_{t-1}^2}D_{t-1}+(1-\rho_s^2)\sigma^2_{t},              & \text{for } t> 1.
\end{cases}
\end{equation}
and $\log^+(x)=\max(0,\log(x))$.
\end{theorem}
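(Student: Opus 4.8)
The plan is to follow the route of Viswanathan--Berger \cite{ViswBergerTIT2000} and Ma--Ishwar \cite{ishwarTIT2015} and reduce Theorem~\ref{theorem1} to a finite-dimensional optimization. Since \cite{ishwarTIT2015} already gives a single-letter characterization of the sum-rate-distortion region for a generic number $M$ of sources with zero decoding delay --- and shows the idealized DPCM encoder of Section~\ref{sec:souenc} to be sum-rate-optimal for the Gauss--Markov/MSE setting --- the first step is to invoke that characterization: $R^{(M)}_{\Sigma}(\mathbf D)$ is the infimum, over all auxiliary descriptions respecting the causal encoding constraints and reconstructions $\widehat{\mathbf X}_t$ meeting \eqref{eqn:dist_tuple}, of a sum of conditional mutual-information terms, one per stage. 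Because the source is jointly Gaussian and the distortion is MSE, the usual extremal-entropy arguments (as carried out for $M\le 3$ in \cite{ishwarTIT2015}) show this infimum is attained by jointly Gaussian descriptions, which collapses the problem to a minimization of $\sum_{t=1}^M\tfrac12\log^+(\widetilde\sigma^2_t/D'_t)$ over the stage-wise achieved error variances $D'_t\le D_t$, where $\widetilde\sigma^2_t$ is the variance of $\mathbf X_t$ as seen at stage $t$ of the Gaussian DPCM scheme.

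The second, and most substantive, step is to identify $\widetilde\sigma^2_t$. Write the temporal Markov relation \eqref{eqn:markov} in innovations form, $X_t(i)=\rho_s\tfrac{\sigma_t}{\sigma_{t-1}}X_{t-1}(i)+N_t(i)$ with $N_t(i)$ zero-mean Gaussian of variance $(1-\rho_s^2)\sigma^2_t$ and, by \eqref{eqn:markov}, independent of $\mathbf X^{t-1}$ and hence of the decoder's past $\widehat{\mathbf X}^{t-1}$. Using the backward-test-channel form $X_{t-1}(i)=\widehat X_{t-1}(i)+Z_{t-1}(i)$ with $Z_{t-1}(i)$ of variance $D'_{t-1}$ uncorrelated with $\widehat X_{t-1}(i)$, one obtains that, conditioned on the side information $\widehat{\mathbf X}_{t-1}$ available at both encoder and decoder, $\mathbf X_t$ behaves as a memoryless Gaussian source of variance $\widetilde\sigma^2_t=\rho_s^2\tfrac{\sigma^2_t}{\sigma^2_{t-1}}D'_{t-1}+(1-\rho_s^2)\sigma^2_t$ (with $\widetilde\sigma^2_1=\sigma^2_1$). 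The incremental rate of stage $t$ is therefore the conditional Gaussian rate-distortion function $\tfrac12\log^+(\widetilde\sigma^2_t/D'_t)$, and achievability is realized directly by the encoder of Section~\ref{sec:souenc}, which has both $\mathbf X_t$ and $\widehat{\mathbf X}_{t-1}$ and can thus use the matching conditional Gaussian codebook.

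It then remains to minimize $\sum_{t=1}^M\tfrac12\log^+(\widetilde\sigma^2_t/D'_t)$ over $D'_t\in(0,D_t]$. The key observation is that, by the temporal Markov structure, each $D'_t$ enters only two summands: the $t$-th through its denominator and the $(t+1)$-th through $\widetilde\sigma^2_{t+1}=\rho_s^2\tfrac{\sigma^2_{t+1}}{\sigma^2_t}D'_t+(1-\rho_s^2)\sigma^2_{t+1}$. A one-line derivative computation shows that the sum of these two terms is non-increasing in $D'_t$ precisely because $(1-\rho_s^2)\sigma^2_{t+1}>0$ --- intuitively, ``over-coding'' an earlier frame never saves enough rate downstream to compensate --- so the objective is coordinatewise non-increasing on the box and its minimum is attained at $D'_t=D_t$ for every $t$; substituting $D'_t=D_t$ turns $\widetilde\sigma^2_t$ into $\sigma^2_{W_t}$ and yields the claimed expression, with $M=1$ reducing to the scalar Gaussian rate-distortion function. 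I expect the main obstacles to be (i) the Gaussian-optimality step of the first paragraph, which for generic $M$ requires a careful extremal-entropy argument rather than a direct evaluation, and (ii) the final optimization: because the stage variances $\widetilde\sigma^2_t$ are coupled through the recursion, the optimal distortion allocation is not evident a priori, and the $\log^+$ truncation needs care (in the degenerate regime $D_t\ge\sigma^2_{W_t}$ the decoder outputs the LMMSE predictor $\rho_s\tfrac{\sigma_t}{\sigma_{t-1}}\widehat{\mathbf X}_{t-1}$, so that stage costs zero rate, and one checks this does not upset the coordinatewise monotonicity argument).
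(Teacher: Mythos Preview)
Your proposal is correct and the achievability half (your second paragraph) is exactly the paper's upper-bound construction. The converse, however, is organised differently.

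The paper does not first establish that optimal descriptions are Gaussian and then optimise over a parametric family $\{D'_t\}$. Instead, starting from the same single-letter expression $\min I(X^M;\widehat X^M)$ of \cite{ishwarTIT2015}, it applies the chain rule and the Markov constraint to reach $h(X_1)+\sum_{i=1}^{M-1}\bigl[h(X_{i+1}\mid\widehat X^i)-h(X_i\mid\widehat X^i)\bigr]-h(X_M\mid\widehat X^M)$, bounds the last term by $\tfrac12\log(2\pi e D_M)$ via max-entropy, and then invokes \cite[Lemma~5]{ViswBergerTIT2000} --- an entropy-power-inequality-type bound --- to obtain $h(X_{i+1}\mid\widehat X^i)-h(X_i\mid\widehat X^i)\ge\tfrac12\log^+(\sigma_{W_{i+1}}^2/D_i)$ directly in terms of the \emph{target} distortions $D_i$. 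This yields the closed-form lower bound in one stroke, with no auxiliary $D'_t$ and no optimisation.

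So your obstacle~(i) is precisely the place where the two routes diverge: what you plan to prove as ``Gaussian optimality'' is, in the paper, replaced by the single inequality of \cite[Lemma~5]{ViswBergerTIT2000}, which combines the conditional EPI with the max-entropy bound and makes the parametric minimisation of your third paragraph unnecessary. Your monotonicity computation is correct and does give something the paper's proof leaves implicit --- namely that over-coding any frame ($D'_t<D_t$) never reduces the total sum rate --- but it is not needed for the theorem itself once the VB lemma is in hand. If you want to follow the paper's line, the extremal-entropy work you anticipate in~(i) collapses to quoting that lemma together with the Markov chain $\widehat X^j-X_j-X_{j+1}$; if you prefer your parametric route, the missing ingredient is to argue carefully that for jointly Gaussian descriptions satisfying the causal Markov constraints the sum rate really equals (not merely is bounded below by) $\sum_t\tfrac12\log^+(\widetilde\sigma_t^2/D'_t)$, which amounts to showing the backward-test-channel DPCM form is without loss of generality among Gaussian schemes.
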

$\log(.)$ being the base 2 logarithm.
\begin{proof}
We start by finding an upper bound to the minimum sum-rate within the rate-distortion region. Then we derive a lower bound and show that the two coincide.
\subsubsection{Upper Bound}
Consider $M$ Gaussian sources $X_1,\ldots,X_M$ such that 
\begin{eqnarray}\label{eqn:proof1}
 X_{t-1}-X_{t}-X_{t+1},
\end{eqnarray}
$\forall t\in\{2,\ldots,M-1\}$.
We can write:
\begin{eqnarray}\label{eqn:proof2}
 X_{t}=\rho_s \frac{\sigma_t}{\sigma_{t-1}}X_{t-1} + N_t, \forall t>1,
\end{eqnarray}
where $\rho_s $ is the correlation coefficient between symbols in the same position of two consecutive source words, $N_t\sim\mathcal{N}(0,(1-\rho_s^2)\sigma_t^2)$ is independent of $X_{t-1}$ and represents the innovation of $X_t$ with respect to $X_{t-1}$.
Let us consider the first source $X_1$. Since $\sigma_1^2\geq D_1$, according to the test channel model \cite[Chapter 10]{coverThomas}, it is possible to find two mutually independent random variables $\widehat{X_{1}}\sim\mathcal{N}(0,\sigma_1^2-D_1)$ and $Z_{1}\sim\mathcal{N}(0,D_1)$ such that 
\begin{eqnarray}\label{eqn:proof3}
X_{1}=\widehat{X_{1}} + Z_{1}.
\end{eqnarray}
$\widehat{X_{1}}$ represents the source-encoded version of $X_1$ after reconstruction and, since an ideal quantizer is assumed, it can approximate $X_1$ with a distortion $D_1$ using a rate 
\begin{eqnarray}
R_{1}(D_1)=\frac{1}{2}\log^+\left(\frac{\sigma^2_{1}}{D_1}\right).
\end{eqnarray}
The next source (frame) $X_2$ can be expressed as:
\begin{eqnarray}\label{eqn:proof4}
 X_{2}=\rho_s \frac{\sigma_2}{\sigma_1}X_{1} + N_2 = \rho_s \frac{\sigma_2}{\sigma_1}\widehat{X_{1}} + W_2,
\end{eqnarray}
where 
\begin{eqnarray}\label{eqn:proof5}
W_2=\rho_s \frac{\sigma_2}{\sigma_1}Z_1 + N_2,
\end{eqnarray}
is a zero-mean Gaussian \ac{r.v.} with variance $\sigma_{W_2}^2=\rho_s^2\frac{\sigma_2^2}{\sigma_1^2}D_1+(1-\rho_s^2)\sigma_2^2$. The source encoder encodes $W_2$ using an ideal quantizer, generating the compressed version $\widehat{W}_2$, such that $W_2=\widehat{W}_2+Z_2$, where $\widehat{W_{2}}\sim\mathcal{N}(0,\sigma_{W_2}^2-D_2)$ and $Z_{2}\sim\mathcal{N}(0,D_2)$ are independent. Since an ideal quantizer is assumed, $\widehat{W}_2$ can be described using a rate equal to $R_2=1/2\log^+\left({\sigma^2_{W_2}}/{D_2}\right)$.
At this point we note that 
\begin{eqnarray}\label{eqn:proof7}
 \widehat{X}_{2}= \rho_s \frac{\sigma_2}{\sigma_1}\widehat{X_{1}} + \widehat{W}_2,
\end{eqnarray}
 achieves the distortion $D_2$ with rate $R_2$, since $E\left\{\left(X_2-\widehat{X}_2\right)^2\right\}=E\{\left(Z_2\right)^2\}=D_2$ by construction.
 Now we apply induction to show that the same procedure can be iterated obtaining the desired distortion at the desired rate for all successive sources.
 Let us assume that source $X_t$ has been successfully encoded with rate $R_t={1}/{2}\log^+\left({\sigma^2_{W_t}}/{D_t}\right)$ so that a reconstructed version $\widehat{X}_t$ achieving distortion $D_t$ can be obtained. From Eqn. \ref{eqn:proof2} and Eqn. \ref{eqn:proof4} we have
 \begin{eqnarray}\label{eqn:proof8}
 X_{t+1}=\rho_s \frac{\sigma_{t+1}}{\sigma_t}X_{t} + N_{t+1} = \rho_s \frac{\sigma_{t+1}}{\sigma_t}\widehat{X_{t}} + W_{t+1},
\end{eqnarray}
where $W_{t+1}=\rho_s \frac{\sigma_{t+1}}{\sigma_t}Z_t + N_{t+1}$
is a zero-mean Gaussian \ac{r.v.} with variance $\sigma_{W_{t+1}}^2=\rho_s^2\frac{\sigma_{t+1}^2}{\sigma_t^2}D_t+(1-\rho_s^2)\sigma_{t+1}^2$. Using an ideal quantizer it is possible to find two independent random variables $\widehat{W}_{t+1}\sim\mathcal{N}(0,\sigma_{W_{t+1}}^2-D_{t+1})$ and $Z_{t+1}\sim\mathcal{N}(0,D_{t+1})$ such that $W_{t+1}=\widehat{W}_{t+1}+Z_{t+1}$ and $\widehat{W}_{t+1}$ reproduces ${W}_{t+1}$ with distortion $D_{t+1}$ using a rate
\begin{eqnarray}\label{eqn:proof10}
R_{t+1}=\frac{1}{2}\log^+\left(\frac{\sigma^2_{W_{t+1}}}{D_{t+1}}\right).
\end{eqnarray}
At this point it is sufficient to use 
\begin{eqnarray}\label{eqn:proof11}
\widehat{X}_{t+1}=\rho_s \frac{\sigma_{t+1}}{\sigma_t}\widehat{X}_t + \widehat{W}_{t+1},
\end{eqnarray}
as reproduction \ac{r.v.} for $X_{t+1}$. Since the decoder already knows $\widehat{X}_t$, the rate required to encode $\widehat{X}_{t+1}$ is the same as $\widehat{W}_{t+1}$. Finally, by construction we have $E\left\{\left(X_{t+1}-\widehat{X}_{t+1}\right)^2\right\}=E\{\left(Z_{t+1}\right)^2\}=D_{t+1}$.
Given the above, the following holds
\begin{eqnarray}\label{eqn:proof12}
R^{(M)}_{\Sigma}(\mathbf{D})\leq\sum_{i=1}^M\frac{1}{2}\log^+\left(\frac{\sigma^2_{W_i}}{D_i}\right).
\end{eqnarray}
\subsubsection{Lower Bound}
The sum-rate-distortion region for the successive transmission of sources forming a Gauss-Markov process is (setting $k=0$ in \cite[Corollary 5.1]{ishwarTIT2015}):
\begin{eqnarray}\label{eqn:proof13}
R^{(M)}_{\Sigma}(\mathbf{D})=\min I(X^M;\widehat{X}^M)
\end{eqnarray}
where the minimum of the mutual information\footnote{The mutual information between two continuous random variables $X$ and $Y$ is defined as $I(X,Y)=h(X)-h(X|Y)$, $h(X)$ being the differential entropy of variable $X$.} is taken over all distributions of $\widehat{X}^M$ satisfying the following:
\begin{eqnarray}
E[d_j(X_j,\widehat{X}^{j-1})]\leq D_j, j=1,\ldots,M \\ \label{eqn:proof14}
\widehat{X}_j-(X^j,\widehat{X}^{j-1})-X^M_{j+1}, j=1,\ldots,M-1. \label{eqn:proof15}
\end{eqnarray}
The following inequalities hold:
\begin{align}\label{eqn:proof16}
R^{(M)}_{\Sigma}(\mathbf{D})&=\min I(X^M;\widehat{X}^M)\notag\\
&\overset{(a)}{=}\min \sum_{i=1}^MI(X^M;\widehat{X}_i \text{\textbar} \widehat{X}^{i-1})\notag\\
&\overset{(b)}{=}\min \sum_{i=1}^MI(X^i;\widehat{X}_i \text{\textbar} \widehat{X}^{i-1})\notag\\
&\overset{(c)}{\geq}\min \sum_{i=1}^MI(X_i;\widehat{X}_i \text{\textbar} \widehat{X}^{i-1})\notag\\
&\overset{(d)}{=}\min \sum_{i=1}^M\left[h(X_i\text{\textbar}\widehat{X}^{i-1})-h(X_i\text{\textbar}\widehat{X}^i)\right]\notag\\
&\overset{}{\geq}h(X_1)+\min\left\{\sum_{i=1}^{M-1}\left[h(X_{i+1}\text{\textbar}\widehat{X}^i)\notag\right.\right.\\
&-\left.\left.h(X_{i}\text{\textbar}\widehat{X}^i)\right]-h(X_M-\widehat{X}_M)\right\}\notag\\
&\overset{(e)}{\geq}\frac{1}{2}\log^+(2\pi e\sigma_1^2)-\frac{1}{2}\log^+(2\pi eD_M)\notag\\
&+\sum_{i=1}^{M-1}\min\left[h(X_{i+1}\text{\textbar}\widehat{X}^i)-h(X_{i}\text{\textbar}\widehat{X}^i)\right],
\end{align}
where (a) follows from the chain rule for mutual information, (b) is because of \ref{eqn:proof15}, (c) follows again from the chain rule for mutual information, (d) is by definition of mutual information while (e) follows from the fact that the Gaussian distribution maximizes entropy. Now, note that the Markov chain $\widehat{X}^j-X_j-X_{j+1}$ holds for $j=1,\ldots,M-1$, which means that the $j+1$-th source is independent of all previous source reconstructions once conditioned to the $j$-th source. By using this in \cite[Lemma 5]{ViswBergerTIT2000} we can write:
\begin{eqnarray}\label{eqn:proof17}
h(X_{i+1}\text{\textbar}\widehat{X}^i)-h(X_{i}\text{\textbar}\widehat{X}^i)\geq \frac{1}{2}\log^+\left(\frac{\sigma_{W_{i+1}}^2}{D_i}\right).
\end{eqnarray}
By plugging \ref{eqn:proof17} into \ref{eqn:proof16} we obtain:
\begin{eqnarray}\label{eqn:proof18}
R^{(M)}_{\Sigma}(\mathbf{D})&\geq& \sum_{i=1}^{M}\frac{1}{2}\log^+\left(\frac{\sigma_{W_{i}}^2}{D_i}\right).
\end{eqnarray}
From equations \ref{eqn:proof12} and \ref{eqn:proof18} we see that the right-hand side of \ref{eqn:proof18} is both an upper and a lower bound for $R^{(M)}_{\Sigma}(\mathbf{D})$, which concludes the proof.
\end{proof}
If the communication between the source encoder and the source decoder takes place over an erasure channel, the reconstruction according to Theorem \ref{theorem1} is not possible. This is due to the ideal \ac{DPCM} encoder, which only encodes the difference between $X_t$ and its best approximation obtainable from $\widehat{X}_{t-1}$. In the following corollary, we derive the minimum distortion attainable by the source decoder in case the first $t-k$ frames are correctly received while the last $k$ frames are lost ($k$-step predictor) in case Gaussian descriptions are used. The loss can be due, for instance, to erasures on the channel.
\begin{corol}\label{corollary1}
Given $t$, $t>0$, successive correlated Gauss-Markov sources of which the first $t-k$, $0 \leq k<t$, are source encoded using a \ac{DPCM} within the \ac{RDR} for a given distortion tuple $\mathbf{D}=\left(D_1,\ldots,D_{t-k}\right)$ and the relative reconstructions $\widehat{X}_1,\ldots,\widehat{X}_{t-k}$ are available at the source decoder, the minimum distortion achievable for source $X_t$ in case Gaussian descriptors are used is:
\begin{equation}
\sigma^2_{W_{t,k}}=\rho_s^{2k}\frac{\sigma^2_{t}}{\sigma_{t-k}^2}D_{t-k}+\left(1-\rho_s^{2k}\right)\sigma^2_{t},
\end{equation}
where $D_0\triangleq 0.$
\end{corol}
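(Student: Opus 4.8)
The plan is to recognize that $\sigma^2_{W_{t,k}}$ is the variance of the effective innovation of $X_t$ with respect to the most recent \emph{available} reconstruction $\widehat{X}_{t-k}$, turn the statement into an MMSE estimation problem, and reuse the backward test-channel structure built in the proof of Theorem~\ref{theorem1}. First I would record that the idealized DPCM construction of that proof delivers reconstructions with $X_{t-k}=\widehat{X}_{t-k}+Z_{t-k}$, $Z_{t-k}\sim\mathcal{N}(0,D_{t-k})$, where $E[(X_{t-k}-\widehat{X}_{t-k})^2]=D_{t-k}$ holds with equality and --- this being the delicate point, revisited below --- $Z_{t-k}$ is uncorrelated with the whole received tuple $\widehat{X}^{t-k}$; here $D_0\triangleq 0$. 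Then I would iterate the Gauss--Markov relation of Eqn.~\ref{eqn:proof2} from index $t-k$ to index $t$: the multiplicative coefficients telescope, $\prod_{j=t-k+1}^{t}\rho_s\frac{\sigma_j}{\sigma_{j-1}}=\rho_s^{k}\frac{\sigma_t}{\sigma_{t-k}}$, giving
\begin{equation}
X_t=\rho_s^{k}\frac{\sigma_t}{\sigma_{t-k}}\,X_{t-k}+\widetilde{N}_{t,k},
\end{equation}
where $\widetilde{N}_{t,k}$ is a zero-mean Gaussian variable, independent of $X^{t-k}$ (hence of $\widehat{X}^{t-k}$), of variance $(1-\rho_s^{2k})\sigma_t^2$ --- obtained either from $\mathrm{Cov}(X_t,X_{t-k})=\rho_s^{k}\sigma_t\sigma_{t-k}$ or by accumulating innovation variances along the chain.

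Substituting $X_{t-k}=\widehat{X}_{t-k}+Z_{t-k}$ yields $X_t=\rho_s^{k}\frac{\sigma_t}{\sigma_{t-k}}\widehat{X}_{t-k}+W_{t,k}$ with $W_{t,k}\triangleq\rho_s^{k}\frac{\sigma_t}{\sigma_{t-k}}Z_{t-k}+\widetilde{N}_{t,k}$; since $Z_{t-k}$ and $\widetilde{N}_{t,k}$ are each independent of $\widehat{X}_{t-k}$ and mutually independent, $W_{t,k}$ is independent of $\widehat{X}_{t-k}$ with variance $\rho_s^{2k}\frac{\sigma_t^2}{\sigma_{t-k}^2}D_{t-k}+(1-\rho_s^{2k})\sigma_t^2=\sigma^2_{W_{t,k}}$. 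This proves achievability: the decoder reconstructs $\widehat{X}_{t,k}=\rho_s^{k}\frac{\sigma_t}{\sigma_{t-k}}\widehat{X}_{t-k}$ and incurs MSE $E[W_{t,k}^2]=\sigma^2_{W_{t,k}}$. As a consistency check, $k=1$ reproduces the variance $\sigma^2_{W_t}$ of Theorem~\ref{theorem1}; alternatively the argument can be organised as an induction on $k$ through $\sigma^2_{W_{t,k}}=\rho_s^2\frac{\sigma_t^2}{\sigma_{t-1}^2}\sigma^2_{W_{t-1,k-1}}+(1-\rho_s^2)\sigma_t^2$, which has the same form as the recursion defining $\sigma^2_{W_t}$ and simply propagates it across the $k$ lost frames.

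For the converse --- that no reconstruction beats $\sigma^2_{W_{t,k}}$ --- I would use that ``Gaussian descriptors'' make $(X_t,\widehat{X}^{t-k})$ jointly Gaussian, so the minimum-MSE reconstruction is the linear estimate $E[X_t\mid\widehat{X}^{t-k}]$. The displayed decomposition exhibits the Markov chain $X_t-X_{t-k}-\widehat{X}^{t-k}$ (since $\widetilde{N}_{t,k}$ is independent of everything generated up to time $t-k$), so $E[X_t\mid\widehat{X}^{t-k}]=\rho_s^{k}\frac{\sigma_t}{\sigma_{t-k}}E[X_{t-k}\mid\widehat{X}^{t-k}]$, and because $\widehat{X}_{t-k}$ is a linear function of $\widehat{X}^{t-k}$ while $Z_{t-k}$ is uncorrelated with $\widehat{X}^{t-k}$, one gets $E[X_{t-k}\mid\widehat{X}^{t-k}]=\widehat{X}_{t-k}$ with residual variance exactly $D_{t-k}$; the MMSE is therefore exactly $\sigma^2_{W_{t,k}}$. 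The main obstacle is precisely this independence bookkeeping carried over from the DPCM scheme --- verifying that $Z_{t-k}$ (and $\widetilde{N}_{t,k}$) is uncorrelated with the \emph{entire} tuple $\widehat{X}^{t-k}$ rather than merely with $\widehat{X}_{t-k}$, which (using Gaussianity to upgrade uncorrelatedness to independence) is what makes the linear estimate genuinely optimal and attains $\sigma^2_{W_{t,k}}$ with equality. The telescoping product and the variance arithmetic are routine.
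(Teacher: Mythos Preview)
Your proposal is correct and follows essentially the same route as the paper's (sketch) proof: iterate the Gauss--Markov recursion from index $t-k$ to $t$, substitute the test-channel decomposition $X_{t-k}=\widehat{X}_{t-k}+Z_{t-k}$, and compute the variance of the resulting error term --- the paper keeps the individual innovations $N_{t-k+j}$ explicit and closes the geometric sum, whereas you collapse them into a single $\widetilde{N}_{t,k}$ of variance $(1-\rho_s^{2k})\sigma_t^2$ up front. Your MMSE converse and the independence bookkeeping for $Z_{t-k}$ against the full tuple $\widehat{X}^{t-k}$ actually go beyond the paper, which simply asserts that $\rho_s\frac{\sigma_{t-k+1}}{\sigma_{t-k}}\widehat{X}_{t-k}$ is the best reconstruction without justification.
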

\begin{proof} (Sketch)
Theorem \ref{theorem1} guarantees that $X_{t-k}$ can be reconstructed with a distortion less than or equal to $D_{t-k}$. Restricting ourselves to the case of Gaussian descriptions, from Eqn. \ref{eqn:proof8} it follows that: 
\begin{eqnarray}
 {X}_{t-k+1}&=&\rho_s \frac{\sigma_{t-k+1}}{\sigma_{t-k}}{X}_{t-k}+{N}_{t-k+1}\label{eqn:coroll1a}\\ 
 &=&\rho_s \frac{\sigma_{t-k+1}}{\sigma_{t-k}}\left(\widehat{X}_{t-k}+Z_{t-k}\right)+{N}_{t-k+1}\label{eqn:coroll1b}\\
 &=&\rho_s \frac{\sigma_{t-k+1}}{\sigma_{t-k}}\widehat{X}_{t-k}+\phi_{t-k+1}\label{eqn:coroll1c},
\end{eqnarray}
where we defined 
$$\phi_{t-k+1}\triangleq\rho_s \frac{\sigma_{t-k+1}}{\sigma_{t-k}}Z_{t-k}+{N}_{t-k+1},$$
 $Z_{t-k}$ being the reconstruction error relative to the last available reconstruction and ${N}_{t-k+1}\sim\mathcal{N}(0,(1-\rho_s^2)\sigma_{t-k+1}^2)$ is the innovation of source ${t-k+1}$ with respect to source ${t-k}$.
Since the decoder has knowledge of only $\widehat{X}_{t-k}$, the best reconstruction of ${X}_{t-k+1}$ it can generate is 
$\rho_s \frac{\sigma_{t-k+1}}{\sigma_{t-k}}\widehat{X}_{t-k}$
which, by construction, achieves a distortion equal to the variance of ${W}_{t-k+1}$, i.e., $\sigma_{W_{t-k+1}}^2$. Thus, according to \ref{eqn:coroll1c}, $\phi_{t-k+1}$ is the one-step reconstruction error.
Iterating Eqn. \ref{eqn:coroll1a} we obtain the following expression for the $k$-step reconstruction error $\phi_{t}$:
\begin{eqnarray}\label{eqn:coroll2}
\phi_{t}=\rho_s^k \frac{\sigma_{t}}{\sigma_{t-k}}Z_{t-k}+\sum_{j=1}^k\rho_s^{k-j}\frac{\sigma_{t}}{\sigma_{t-k+j}}{N}_{t-k+j}.
\end{eqnarray}
The \ac{MSE} error in the reconstruction of ${X}_t$ is $E\left\{\phi_{t}^2\right\}$. Since all random variables in \ref{eqn:coroll2} are zero-mean and independent, the \ac{MSE} is:
\begin{eqnarray}\label{eqn:coroll3}
\sigma^2_{W_{t,k}}&=&E\left\{\phi_{t}^2\right\}\notag\\
&=&\rho_s^{2k}\frac{\sigma^2_{t}}{\sigma_{t-k}^2}D_{t-k}+\sigma_{t}^2\sum_{j=1}^k\rho_s^{2(k-j)}\frac{{\sigma^2}_{t-k+j}}{\sigma_{t-k+j}^2}\left(1-\rho_s^2\right)\notag\\
&=&\rho_s^{2k}\frac{\sigma^2_{t}}{\sigma_{t-k}^2}D_{t-k}+\left(1-\rho_s^2\right)\sigma_{t}^2\sum_{j=0}^{k-1}\left(\rho_s^{2}\right)^j\notag\\
&=&\rho_s^{2k}\frac{\sigma^2_{t}}{\sigma_{t-k}^2}D_{t-k}+\left(1-\rho_s^2\right)\sigma_{t}^2\frac{1-\rho_s^{2k}}{1-\rho_s^2}\notag\\
&=&\rho_s^{2k}\frac{\sigma^2_{t}}{\sigma_{t-k}^2}D_{t-k}+\left(1-\rho_s^{2k}\right)\sigma_{t}^2\notag.
\end{eqnarray}
\end{proof}
As a final remark, we note that such distortion bounds from above the minimum \ac{MSE} achievable by a decoder that, based on the currently available reconstruction, tries to approximate a source which is $k$ steps ahead.
\section{Conclusion}\label{sec:conclusion}
We provided the derivation of the sum-rate-distortion region for a generic number of successive correlated Gauss-Markov sources, along the line of the result presented in \cite{ishwarTIT2015} for the case of 3 sources. Starting from this result, we derived the minimum distortion achievable for source number $t$ in case only the first $t-k$ sources' reconstructions are available at the decoder and Gaussian descriptions are used. Such result is relevant for real-time video streaming over wireless channels, because, for the considered model, it gives a bound on the quality of the best reconstruction achievable by a source decoder when a generic number $k$ of consecutive frames in a \ac{GOP} are lost on the channel.
\IEEEtriggeratref{0}
\bibliographystyle{IEEEtran}
\bibliography{ArXivSRDR}

\begin{thebibliography}{1}
\providecommand{\url}[1]{#1}
\csname url@samestyle\endcsname
\providecommand{\newblock}{\relax}
\providecommand{\bibinfo}[2]{#2}
\providecommand{\BIBentrySTDinterwordspacing}{\spaceskip=0pt\relax}
\providecommand{\BIBentryALTinterwordstretchfactor}{4}
\providecommand{\BIBentryALTinterwordspacing}{\spaceskip=\fontdimen2\font plus
\BIBentryALTinterwordstretchfactor\fontdimen3\font minus
  \fontdimen4\font\relax}
\providecommand{\BIBforeignlanguage}[2]{{%
\expandafter\ifx\csname l@#1\endcsname\relax
\typeout{** WARNING: IEEEtran.bst: No hyphenation pattern has been}%
\typeout{** loaded for the language `#1'. Using the pattern for}%
\typeout{** the default language instead.}%
\else
\language=\csname l@#1\endcsname
\fi
#2}}
\providecommand{\BIBdecl}{\relax}
\BIBdecl

\bibitem{ViswBergerTIT2000}
H.~Viswanathan and T.~Berger, ``Sequential coding of correlated sources,''
  \emph{IEEE Trans. Inf. Theory}, vol.~46, no.~1, pp. 236--246, Jan. 2000.

\bibitem{ishwarTIT2015}
N.~Ma and P.~Ishwar, ``On delayed sequential coding of correlated sources,''
  \emph{IEEE Trans. Inf. Theory}, vol.~57, no.~6, pp. 3763--3781, June 2015.

\bibitem{bergerRDT}
T.~Berger, \emph{Rate distortion theory: a mathematical basis for data
  compression}, T.~Kailath, Ed.\hskip 1em plus 0.5em minus 0.4em\relax
  Prentice-Hall, Inc. Englewood Cliffs, New Jersey, 1971.

\bibitem{tatikonda2004}
S.~Tatikonda, A.~Sahai, and S.~Mitter, ``Stochastic linear control over a
  communication channel,'' \emph{IEEE Trans. Autom. Control}, vol.~49, no.~9,
  pp. 1549--1561, Sep. 2004.

\bibitem{stavrouControlLetters2018}
P.~Stavrou, T.~Charalambous, and C.~D. Charalambous, ``Finite-time
  nonanticipative rate distortion function for time-varying scalar-valued
  {Gauss-Markov} sources,'' \emph{IEEE Control Syst. Letters}, vol.~2, no.~1,
  pp. 175--180, Jan. 2018.

\bibitem{khina_ITW2017_seqCodGauss}
A.~Khina, V.~Kostina, A.~Khisti, and B.~Hassibi, ``Sequential coding of
  {Gauss-Markov} sources with packet erasures and feedback,'' in \emph{IEEE
  Inf. Theory Workshop (ITW)}, Kaohsiung, Taiwan, 6-10 Nov. 2017.

\bibitem{coverThomas}
T.~Cover and J.~Thomas, \emph{Elements of Information Theory}.\hskip 1em plus
  0.5em minus 0.4em\relax John Wiley and Sons, 2006.

\bibitem{yangITA2009}
E.~Yang, L.~Zheng, D.~He, and Z.~Zhang, ``On the rate distortion theory for
  causal video coding,'' in \emph{Inf. Theory and Applications Workshop}, San
  Diego, CA, U.S.A., 8-13 Feb. 2009.

\end{thebibliography}
\flushend
\end{document}